\newtheorem{theorem}{\textbf{Theorem}}
\newtheorem{Prob}{\textbf{Problem}}
\newtheorem{property}{\textbf{Property}}
\begin{document}
\pagenumbering{gobble}

\title{Communications-Caching-Computing Tradeoff Analysis for Bidirectional Data Computation in Mobile Edge Networks\thanks{}}
\author{\text{Yaping Sun}, \text{Lyutianyang Zhang},\ \text{Zhiyong Chen}, \text{and Sumit Roy,}  \IEEEmembership{Fellow, IEEE}
\thanks{Yaping Sun and Zhiyong Chen are with Cooperative Medianet Innovation Center, Shanghai Jiao Tong University, Shanghai 200240, China (e-mail: \{yapingsun, zhiyong chen\}@sjtu.edu.cn). Lyutianyang Zhang and Sumit Roy are with Department of Electrical \& Computer Engineering, University of Washington, Seattle, WA, USA (e-mail:\{lyutiz,sroy\}@uw.edu).}}
\maketitle
\begin{abstract}
  With the advent of the modern mobile traffic, e.g., online gaming, augmented reality delivery and etc.,  a novel bidirectional computation task model where the input data of each task consists of two parts, one  generated at the mobile device in real-time and the other originated from the Internet proactively, is emerging as an important use case of 5G. In this paper, for ease of analytical analysis, we consider the homogeneous bidirectional computation task model in a mobile edge network which consists of one mobile edge computing (MEC) server and one mobile device, both enabled with computing and caching capabilities.
 Each task can be served via three mechanisms, i.e., local computing with local caching, local computing without local caching and computing at the MEC server. 
 To minimize the average bandwidth, we formulate the joint caching and computing optimization problem under the latency, cache size and average power constraints. We derive the closed-form expressions for the optimal policy and the minimum bandwidth. The tradeoff among communications, computing and caching is illustrated both analytically and numerically, which provides insightful guideline for the network designers. 
\end{abstract}
%\begin{IEEEkeywords}   Bandwidth Minimization, Bidirectional Computation Task, 3C Tradeoff, Mobile Edge Network.
%\end{IEEEkeywords}
%\newpage

\section{Introduction}

The advent of modern mobile traffic, e.g., online gaming, mobile virtual reality (VR)/augmented reality (AR) delivery and etc., incurs ultra-high requirements on the wireless bandwidth \cite{nature}. For example, the mobile VR delivery requires the transmission rate on the order of G bit/s \cite{highorder}. Mobile edge network (MEN) that equips the edge nodes of the mobile network, e.g., the mobile edge computing (MEC) server and the mobile devices, with caching and computing resources is deemed as one of the most promising approaches to alleviate the bandwidth burden on the mobile carriers \cite{colors}. In particular, mobile edge caching indicates proactively storing popular contents into the network edge nodes to reduce the traffic redundancy and transmission latency \cite{ali,joint}. MEC refers to computing the tasks at the network edge nodes to reduce the core network burden and the latency \cite{mec00,mec01,mec02,efficient,tra,mec_tony,liulei}.  How to efficiently utilize the caching and computing resources in  MEN triggers the research interests from both the academic and industrial areas \cite{efficient,tra,liulei,mec_tony,yang,TCOM_sun,TWC_sun}.  

The computation model in the currently existing literature on MEC can be named as \textit{one-way} computation task model. That is, the input data of each computation task is assumed to be either generated at the mobile device \cite{efficient,tra,mec_tony} or originated from the Internet \cite{yang, TCOM_sun,TWC_sun}. In particular, in \cite{efficient,tra,mec_tony}, the mobile device offloads the input data to the MEC server for computation and then downloads the output data from the MEC server. In \cite{yang,TCOM_sun,TWC_sun}, when the task is computed at the mobile device, the mobile device has to download the input data from the MEC server first if not cached locally and then computes the input data to obtain the output data. 

Novelly, in this paper, we consider a \emph{bidirectional} computation task model, where the input data consists of two parts, one of which is generated from the mobile device in real-time and the other of which is originated from the Internet proactively. One of the most directly motivating examples is online 
Role-Playing Game (RPG). Suppose one player is controlling a role and choosing which place/map to go. The location of the role combined with the map information from the MEC server could help render the picture for the player after some computations. The input data consists of these generated at the mobile device in real-time including current player equipment/weapon selection, strategy selection as well as role selection, and also those proactively generated from the Internet such as the map information. This rendering task could be done either at the mobile device or at the MEC server. If the task is computed at the MEC server, the mobile device has to first upload the player's related information to the MEC server, then the MEC server computes the task and transmits the computation result to the mobile device. If the task is computed at the mobile device, the mobile device has to first download the map information from the MEC server and then computes the rendering task. Since the required transmission load and the computation frequency when computing at the mobile device are different from those when computing at the MEC server, the corresponding consumed bandwidth  differs and thus the computing policy requires careful design. Besides, the history of all the players' actions could provide a popularity distribution of the map preferences, e.g., the maps/places the players mostly like to go to. Based on a priori knowledge of the popularity, the popular maps/places could be proactively cached at the mobile device to save the consumed bandwidth. 

%Besides, joint design of caching and computing policy is drawing increasing attention based on the fact that the input data or the output data of the computation task is cacheable \cite{yang,TCOM_sun,TWC_sun}. Still take the RPG as an example. 

Inspired by this, under the latency, cache size and average power constraints,  this paper jointly optimizes the computing and caching policy to minimize the average bandwidth for the bidirectional computation task model. Then, we derive the closed-form expressions for the optimal policy and the minimum bandwidth in the homogeneous scenario. The tradeoff among communications, computing and caching (3C) is at last illustrated both theoretically and numerically.%Resource allocation at the MEC/cloud server has been considered in \cite{caching_mec_twc,edgeCC,hao2018,chen2018edge,yu2018computation}. In particular, \cite{caching_mec_twc} designs optimal computing offloading and caching policy for minimizing the latency in a hybrid mobile cloud/edge computation system. \cite{edgeCC} studies the network-assisted video rate adaptation problem in an MEC-enabled base station with caching.  \cite{hao2018} considers the joint optimization of task caching and offloading on edge cloud to minimize the energy consumption. \cite{chen2018edge} considers the expected latency minimization for each task service under the cache size constraint in the edge cloud. \cite{yu2018computation} considers the optimal fine-grained collaborative offloading strategies with caching-enhancements at the MEC server to minimize the overall execution delay of the mobile device. Moreover, all the works in \cite{caching_mec_twc,edgeCC,hao2018,chen2018edge,yu2018computation} consider one-way computation task model in which the input data of each computation task is generated at the mobile device. On the other hand, resource allocation at the mobile devices has been considered in \cite{yang,TCOM_sun,TWC_sun}. In particular, \cite{yang,TCOM_sun} consider the joint caching and computing policy optimization to minimize the bandwidth comsumption for the mobile VR delivery in a single-user scenario. \cite{TWC_sun} considers the joint design in a multiple-user scenario by taking into account the multicast opportunity. All the works in \cite{yang,TCOM_sun,TWC_sun} consider one-way computation task model in which the input data of each computation task is originated from the Internet proactively.  

\section{System Model}

\begin{figure}[t]
  \centering
  \includegraphics[width=70mm, height=26mm]{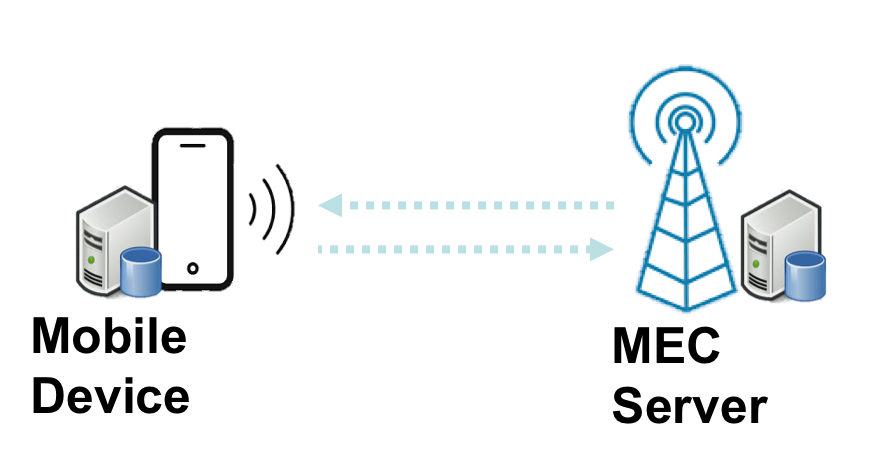}
\caption{System Model}
\label{fig:System_model}
\end{figure}

As illustrated in Fig.~\ref{fig:System_model}, we consider a mobile edge network consisting of one MEC server and one mobile device, both equipped with certain caching and computing abilities.%\footnote{\textcolor{black}{ This system model can be easily extended to a multi-user scenario.}} %In this paper, \textcolor{black}{we optimize the joint caching and computing policy at the mobile device to minimize the average bandwidth consumption under the latency, cache size, and average power constraints}. 
The mobile device is assumed to \textcolor{black}{request} one task at each time. %Key notations are summarized in Table \ref{tab:notation}.

\subsection{Task Model}
%The system involves a mobile device and a MEC server which take turns to do caching and computing cooperatively to complete a task. 
Assume that there are $F$ tasks in total to be requested by the mobile device. Denote with $\mathcal{F} \overset{\Delta}{=} \{1,2, \cdots,f,\cdots, F\}$ the task set. Each task $f\in \mathcal{F}$ is characterized by a $5$-item tuple $\Big\{I^D\ (\text{in\ bits}),\ I^S\ (\text{in\ bits}),\! O\  (\text{in\ bits}),\! w \ (\text{in cycles/bit}),\\ \tau\ (\text{in seconds})\Big\}$.\footnote{This system model can be directly extended to a multi-user heterogeneous scenario.} In particular, for each task $f\in \mathcal{F}$, $I^D$ represents the size of the local input data which is generated at the mobile device in real time. $I^S$ represents the size of the remote input data which is originated from the Internet and can be proactively stored. $O$ represents the size of the output data. $w$ and $\tau$ denote the required computation cycles per bit and the maximum tolerable service latency, respectively. Since the input remote data is generated proactively, the task popularity can be learned based on the request history information. The task request process at the mobile device is assumed to conform to the independent reference model based on the following assumptions \cite{TCOM_sun}: i) the tasks that the mobile device wants to process is fixed to the set $\mathcal{F}$; ii) each probability of task $f$ to be requested, denoted as $P_f$, is assumed to be independent identical distributed (i.i.d.). Namely, $\sum_{f=1}^FP_f = 1$. In particular, we consider a homogeneous scenario, i.e., $P_f = \frac{1}{F}$. Since the local input data is generated in real time, its content may vary from time to time. However, the input local data size is assumed to be unchanged.

%For clarity, take the online RPG as an example. The map scenario information corresponds to the remote input data, the size of which is $I^S$. The history of all the players’ actions could provide a popularity distribution of the map preferences, i.e., $\left(P_f\right)_{f\in \mathcal{F}}$. The task space $\mathcal{F}$ is constructed according to the map information. The real-time player's pose information at the mobile device corresponds to the local input data, the size of which is $I^D$. Even though corresponding to each map information, the local input data may vary from time to time, but the size of the pose information is in general unchanged. The rendered view by combining the player's information with the map scenario information corresponds to the output data, the size of which is $O$. The overall latency corresponds to $\tau$.

\subsection{Caching and Computing Model}

First, consider the cache placement at the mobile device. From the above-mentioned task model, we can see that only caching of the remote input data can be considered. Denote with \textcolor{black}{$c_{f}\in \{0,1\}$} the caching decision \textcolor{black}{of task $f\in \mathcal{F}$}, where \textcolor{black}{$c_{f}=1$} means that the \textcolor{black}{remote input} data is cached at the mobile device and $c_f=0$, otherwise. Denote with $C$ (in bits) the cache size  at the mobile device and the caching constraint is given by
\begin{equation}\label{cache}
\sum_{f} I^{S}c_{f} \leq C.
\end{equation}
All the remote input data are assumed to be proactively cached at the MEC server considering the storage size at the MEC server is generally large enough. %In addition, from the perspective of the user's privacy, we assume the device-generated input data cannot be proactively cached at the MEC server.

Next, consider the computing decision at the mobile device. Denote with $d_f \in \{0,1\}$ the computing decision of task $f \in \mathcal{F}$, where $d_f=1$ means that task $f$ is computed at the mobile device and $d_f=0$ means that task $f$ is computed at the MEC server. Denote with $f_D$ (in cycles/second) the computation frequency of the mobile device and $f_S$ (in cycles/second) the computation frequency of the MEC server. The energy consumed for computing one cycle with frequency $f_D$ at the mobile device is $\mu f_D^2$, where $\mu$ is the effective switched capacitance related to the chip architecture and can indicate the power efficiency of CPU at the mobile device \cite{mao2}. Denote with $\bar{P}$ in (W) the average available power at the mobile device. We assume that there is no power constraint at the MEC server considering the MEC server is in general connected to a power grid.
 %Denin general conneote with $\mu$ thver is e computation efficiency, which is assumed to be the same at both the mobile device and  the MEC server.} 

%\subsection{Request Model}
% In addition, in order to satisfy the requirement of low latency in 5G era, we have the following time constraint. 
\subsection{Service Mechanism}
Based on the joint caching and computing decision $\left(\textbf{c}\triangleq \left(c_f\right)_{f\in \mathcal{F}},\ \textbf{d} \triangleq \left(d_f\right)_{f\in \mathcal{F}}\right)$, each task $f\in \mathcal{F}$ can be served via the following three routes. 
\begin{itemize}
    \item \textbf{Local computing with local caching}. When $d_f=1$ and $c_f = 1$, the mobile device immediately computes task $f$ based on the real-time local input data and the locally cached remote input data. The required latency is the computation latency at the mobile device only, i.e., $\frac{\left(I^S+I^D\right)w}{f_D}d_f$. For satisfying the latency constraint, we assume that $\frac{\left(I^S+I^D\right)w}{f_D}\leq \tau$.  The average consumed power at the mobile device for task $f$ is the consumed computation power only, i.e.,  $\frac{\mu f_D^2w\left(I^D+I^S\right)}{F\tau}d_f$. 
    \item \textbf{Local computing without local caching}. When $d_f=1$ and $c_f = 0$, the mobile device first downloads the remote input data  from the MEC server and then computes the task locally. The required latency includes the downloading latency and the local computation latency, i.e.,
    \begin{align}\label{latency1}
    \Big(\!\frac{I^{S}}{B_{f}^{D}\log(1+\frac{P_{D}h^2}{N_{0}})}\!+\!\frac{(I^{S}+I^{D})w}{f_D}\!\Big) d_f\!\left(1\!-\!c_f\right)\!\leq\! \tau,
\end{align}
where $B_f^D$ is the downlink bandwidth allocated for the transmission of task $f$, $P_{D}$ is the average downlink power spectrum density (PSD) at the MEC server, $h$ is the channel coefficient and $N_0$ is the average PSD of the channel noise. The average consumed power at the mobile device for task $f$ is the consumed computation power only, $\frac{\mu f_D^2w\left(I^D+I^S\right)}{F\tau}d_f$. 
    \item{\textbf{MEC computing}}. When $d_f=0$, the mobile device first uploads the local input data to the MEC server. After receiving the local input data, the MEC server computes task $f$ and then transmits the output data to the mobile device. The required latency includes the uplink transmission latency, the computation latency at the MEC server, and the downlink transmission latency, i.e., 
    \begin{align}\label{latency2}
    &\Bigg(\frac{I^{D}}{B_{f}^{U}\log(1+\frac{P_{U}h^2}{N_0})}+\frac{(I^{S}+I^{D})w}{f_S}\nonumber\\
    &+\frac{O_{f}}{B_f^{D}\log(1+\frac{P_{D}h^2}{N_{0}})}\Bigg)(1-d_f) \leq \tau,
\end{align}
where $B_f^U$ is the uplink bandwidth allocated to the mobile device and $P_{U}$ is the average uplink PSD at the mobile device. The average consumed power at the mobile device for task $f$ is the average uplink transmission power, i.e., $\frac{P_UI^{D}}{F\tau \log\left(1+\frac{P_{U}h^2}{N_{0}}\right)}(1-d_f)$. 
%In either case, it is necessary to keep the duration to be less than $\tau$. 
\end{itemize}
From above, under the average power constraint at the mobile device, we have
\begin{align}\label{power}
    \sum_{f=1}^F &\Bigg(\frac{\mu f_D^2w\left(I^D+I^S\right)}{F\tau}d_f \nonumber\\
    &+\frac{P_U I^{D}}{F\tau \log\left(1+\frac{P_{U}h^2}{N_{0}}\right)}(1-d_f)\Bigg)\leq \bar{P}.
\end{align}
The average consumed bandwidth, including both uplink and downlink bandwidth, is given by
\begin{align}
  \frac{1}{F}  \sum_{f=1}^F\left(B_f^U+B_f^D\right).
\end{align}

\section{Problem Formulation and Optimal Property Analysis}
\subsection{Problem Formulation}
We formulate the joint caching and computing optimization problem to minimize the average required bandwidth, including both the uplink and downlink bandwidth, subject to the cache size, average power and latency constraints, as below.
\begin{Prob}[Joint Caching and Computing Optimization]
\begin{align}
&\min_{\textbf{c},\ \textbf{d}}\ & \ \ \frac{1}{F}\sum_{f=1}^F (B_f^U+B_f^D)\nonumber\\
&\ s.t.\ & (\ref{cache}),\ (\ref{latency1}),\ (\ref{latency2}),\ (\ref{power}),\nonumber\\
&\ \ \ & c_f \in \{0,1\},\ d_f\in \{0,1\}, f\in \mathcal{F}.\nonumber
\end{align}
\end{Prob}
Denote with $\left(\textbf{c}^* \triangleq (c_f^*)_{f\in \mathcal{F}},\textbf{d}^*\triangleq (d_f^*)_{f\in \mathcal{F}}\right)$ the optimal joint caching and computing policy and $B^*$ the corresponding optimal average bandwidth.
\subsection{Optimal Properties}
First, we can directly observe the following property between the local computing and local caching.
\begin{property}
When $d_f=0$, $c_f = 0$ without loss of optimality. 
\end{property}
%Property~1 is obvious since when task $f$ is decided to be computed at the MEC server, caching the remote input data at the mobile device is a waste of storage space.

Then, for each $f\in \mathcal{F}$, introduce $x_{f,j}\in \{0,1\},\ j \in \{1,2,3\},$ with $x_{f,j} = 1$ indicating that task $f$ is served via the $j$-th route and $x_{f,j}=0$ otherwise. Here, the first route corresponds to the local computing with local caching, i.e., $d_f=1,\ c_f=1$. The second  refers to the local computing without caching, i.e., $d_f=1,\ c_f=0$. The third refers to the MEC computing, i.e., $d_f=0,\ c_f=0$. Denote with $B_{f,j},\ j\in \{1,2,3\}$ the minimum value of $B_f^U+B_f^D$ for the $j$-th route given $(\textbf{c},\ \textbf{d})$.

Next, under latency constraint, we obtain the analytical expression for $B_{f,j}$.
\begin{property}
\textcolor{black}{When $d_f=1,\ c_f=1$, i.e., $x_{f,1}=1$, \ $B_{f,1} = 0$. }
\end{property}
\begin{property}
\textcolor{black}{When $d_f=1,\ c_f=0$, i.e., $x_{f,2} = 1$,\ $B_{f,2} = \frac{I^S}{\left(\tau-\frac{(I^S+I^D)w}{f_D}\right)\log\left(1+\frac{P_Dh^2}{N_0}\right)}$.}
\end{property}
Property~3 can be obtained directly from (\ref{latency1}). 
\begin{property}
\textcolor{black}{When $d_f=0,\ c_f=0$, i.e., $x_{f,3} =1$,\ $B_{f,3} = \frac{(\sqrt{a_1}+\sqrt{a_2})^2}{a_3}$, where $a_1=\frac{I^D}{\log\left(1+\frac{P_Uh^2}{N_0}\right)}$, $a_2=\frac{O}{\log\left(1+\frac{P_D h^2}{N_0}\right)}$, and $a_3=\tau-\frac{(I^S+I^D)w}{f_S}$.}
\end{property}
\begin{proof}
\textcolor{black}{Proof of Property~4 can be seen in Appendix~A. }
\end{proof}

After that, via replacing $B_f^U+B_f^D$ in the objective function of Problem~1 with $B_{f,j}$ obtained from Properties 2-4, the latency constraints (\ref{latency1}) and (\ref{latency2}) can be eliminated. 
\textcolor{black}{Denote with $X_j \triangleq \sum_{f=1}^F x_{f,j},\ j\in \{1,2,3\}$ the number of tasks served via the $j$-th route. 
%$N=N_1+N_2$ where $N_1$ denotes the number of files following processing method 1 and $N_2$ denotes the number of files following processing method 2. 
Since each task is independent of each other and homogeneous, given $(X_j)_{j\in \{1,2,3\}}$, $(x_{f,j})_{f\in \mathcal{F},j\in \{1,2,3\}}$ can be obtained via 
\begin{equation}\label{service1}
x_{f,1} =
\begin{cases}
1& \text{$i = 1, \cdots, X_1$,}\\
0& \text{otherwise,}
\end{cases}
\end{equation}
\begin{equation}\label{service2}
x_{f,2} =
\begin{cases}
1& \text{$i = X_1+1, \cdots, X_1+X_2$,}\\
0& \text{otherwise,}
\end{cases}
\end{equation}
\begin{equation}\label{service3}
x_{f,3} =
\begin{cases}
1& \text{$i = X_1+X_2+1,\! \cdots,\! X_1+X_2+X_3$,}\\
0& \text{otherwise.}
\end{cases}
\end{equation}}

Via replacing $(x_{f,j})_{f\in \mathcal{F},j\in \{1,2,3\}}$ with $(X_j)_{j\in \{1,2,3\}}$, Problem~1 is transformed into Problem~2 equivalently. 

\textcolor{black}{\begin{Prob}[Equivalent Optimization]\label{prob:special2}
\begin{align}
&\min_{(X_j)_{j\in \{1,2,3\}}} X_1 B_1+X_2 B_2+X_3 B_3 \label{obj_special}\\
& ~~~~~~ s.t. ~~~~~\ \ I^S X_1 \leq C, \label{cache_special}\\
&    ~~~~~~~~\ \ \ \ \ \ \ \ \ k_{1}(X_1+X_2) + k_{2}X_3\leq \bar{P},\label{power_special}\\
& ~~~~~~~~~\ \ \ \ \ \ \ \  X_1+X_2+X_3 = F,\\
& ~~~~~~~~\ \ \ \ \ \ \ \ \ 0<=X_1<=F,\\
& ~~~~~~~~\ \ \ \ \ \ \ \ \ 0<=X_2<=F,\\
& ~~~~~~~~\ \ \ \ \ \ \ \ \ 0<=X_3<=F,
\end{align}
where $k_1 \triangleq \frac{\mu f_D^2w\left(I^D+I^S\right)}{\tau F}$ and $k_2 \triangleq \frac{P_U I^{D}}{F\tau \log\left(1+\frac{P_{U}h^2}{N_{0}}\right)}$ represent the average power consumed at the mobile device of each task for local computing and uplink transmission, respectively. 
\end{Prob}}

\section{Optimal Policy and Tradeoff Analysis}

\subsection{Optimal Policy}
\begin{theorem} \label{theorem1}
(Optimal joint policy when $k_1>k_2$) If $B_{3}>B_{2}$, the optimal joint policy is given as 
\begin{equation} \label{case1}
    \begin{aligned}
        %&X=\min\{F, floor(\frac{\bar{P}-Fk_2}{k_1-k_2})\}\\
        &X_{1}=\min \left \{ \left\lfloor  \frac{C}{I^{S}} \right \rfloor , F, \left \lfloor \frac{\bar{P}-Fk_2}{k_1-k_2} \right\rfloor \right \},\\
        &X_{2} = \max \left \{0, \min \left \{F,  \left \lfloor \frac{\bar{P}-Fk_2}{k_1-k_2} \right \rfloor \right \} -X_1 \right\},\\ %\nonumber
        & X_3 = F - X_1 - X_2,% \nonumber
        %&\mathbf{V}=\{X_1,X_2,F-X_1-X_2\},
    \end{aligned}
\end{equation}
where $\lfloor \bullet \rfloor$ denotes the round-down function.  $B^*=B_2X_2+B_3X_3 $. If $B_{3} \leq B_{2}$, 
the optimal joint policy is given as 
\begin{equation}\label{case2}
    \begin{aligned}
        &X_{1}=\min \left \{ \left \lfloor \frac{C}{I^{S}} \right \rfloor, F,  \left \lfloor \frac{\bar{P}-Fk_2}{k_1-k_2} \right \rfloor \right \},\\
        %&X_{1}=\min\{floor(C/I^{S}),X\}\\
        & X_{2} = 0, \\ 
        &X_3 = F -X_1.
    \end{aligned}
\end{equation}
$B^*=B_3X_3$.
\end{theorem}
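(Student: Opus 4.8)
The plan is to reduce Problem~\ref{prob:special2} to a one-dimensional search over the ``local load.'' First I would invoke Property~2, which gives $B_1=0$, so the objective~(\ref{obj_special}) collapses to $X_2 B_2 + X_3 B_3$; since route~1 is free, the only role of $X_1$ is to absorb tasks away from the two costly routes. Using the equality $X_1+X_2+X_3=F$ to eliminate $X_3=F-X_1-X_2$, I would substitute into the power constraint~(\ref{power_special}) and exploit $k_1>k_2$ to rewrite it as the single bound $X_1+X_2 \le \frac{\bar{P}-Fk_2}{k_1-k_2}$. Combined with $X_3\ge 0$ (i.e.\ $X_1+X_2\le F$) and integrality, this shows that the local load $S\triangleq X_1+X_2$ ranges over the integers $0,\dots,S_{\max}$ with $S_{\max}=\min\{F,\lfloor(\bar{P}-Fk_2)/(k_1-k_2)\rfloor\}$.

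Next, for a fixed $S$ I would optimize the split between routes~1 and~2. Because $B_2>0$ while route~1 is free, the objective is minimized by pushing as many of the $S$ locally computed tasks onto route~1 as the cache permits, i.e.\ $X_1=\min\{S,\lfloor C/I^S\rfloor\}$ and $X_2=S-X_1$; note the cache constraint~(\ref{cache_special}) is exactly $X_1\le \lfloor C/I^S\rfloor$. Substituting this back, the objective becomes a piecewise-linear function of the single integer variable $S$ with a kink at $S=\lfloor C/I^S\rfloor$: for $S\le\lfloor C/I^S\rfloor$ it equals $(F-S)B_3$, having slope $-B_3<0$, and for $S>\lfloor C/I^S\rfloor$ it equals $(S-\lfloor C/I^S\rfloor)B_2+(F-S)B_3$, having slope $B_2-B_3$.

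The final step is a case split on the sign of $B_2-B_3$. If $B_3>B_2$ both pieces are decreasing, so the optimum is at $S=S_{\max}$, and unwinding $X_1=\min\{S_{\max},\lfloor C/I^S\rfloor\}$ reproduces~(\ref{case1}). If $B_3\le B_2$ the function decreases up to the kink and is nondecreasing afterward, so the optimum is at $S=\min\{\lfloor C/I^S\rfloor,S_{\max}\}$, which forces $X_2=0$ and yields~(\ref{case2}). I expect the main obstacle to be the bookkeeping around integrality and the nesting of the three $\min$ and floor terms: I must verify that the claimed $(X_1,X_2,X_3)$ is feasible, in particular that $S_{\max}\ge 0$, which follows from the assumed feasibility of the all-MEC policy $X_3=F$, and that evaluating the piecewise-linear objective at the chosen integer boundary indeed gives the stated $B^*$ in each branch. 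The monotonicity argument itself is routine once the reduction to $S$ is in place.
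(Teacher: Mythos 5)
Your proposal is correct and follows essentially the same route as the paper's Appendix~B: exploit $B_1=0$, use $k_1>k_2$ to turn the power constraint into the cap $X_1+X_2\le\lfloor(\bar{P}-Fk_2)/(k_1-k_2)\rfloor$, cap $X_1$ by $\lfloor C/I^S\rfloor$ and by $F$, and greedily fill route~1 and then route~2 or~3 according to the sign of $B_2-B_3$. Your explicit one-dimensional piecewise-linear reformulation in $S=X_1+X_2$ is a tidier justification of the greedy step that the paper merely asserts, but the underlying argument is the same.
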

\begin{proof}
Proof of Theorem~1 can be seen in Appendix~B.
\end{proof}

\begin{theorem}\label{theorem2}
(Optimal joint policy when $k_1 \leq k_2$) If $B_3>B_2$, the optimal joint policy is given as 
\begin{equation}\label{case3}
    \begin{aligned}
        &X_{1}= \left  \lfloor \frac{C}{I^S} \right \rfloor,\\
        &X_{2}=F-X_1,\\
        &X_3 = 0.
    \end{aligned}
\end{equation}
 $B^*=B_2X_2$. If $B_3 \leq B_2$, the optimal joint policy is %given as 
\begin{equation}\label{case4}
    \begin{aligned}
        &X_{1}=  \left \lfloor \frac{C}{I^S} \right \rfloor,\\
        &X_{2}=\max \left\{0, \left \lceil \frac{\bar{P}-F k_2}{k_2-k_1} \right \rceil-X_{1} \right\},\\
        &X_3 = F-X_1-X_2.
    \end{aligned}
\end{equation}
$B^*=B_2X_2+B_3X_3$.
\end{theorem}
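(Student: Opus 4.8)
The plan is to treat Problem~\ref{prob:special2} as an integer linear program in three variables and reduce it to a two-variable problem by exploiting the equality constraint together with $B_1=0$ from Property~2. Since $X_1+X_2+X_3=F$, I would substitute $X_2=F-X_1-X_3$ into the objective, which becomes
\begin{equation}
X_1B_1+X_2B_2+X_3B_3 = B_2F - B_2X_1 + (B_3-B_2)X_3,
\end{equation}
an affine function of $(X_1,X_3)$ that is nonincreasing in $X_1$ (coefficient $-B_2\le 0$) and monotone in $X_3$ with sign determined by $B_3-B_2$. The same substitution turns the power constraint \eqref{power_special} into $k_1F+(k_2-k_1)X_3\le\bar P$, and here the hypothesis $k_1\le k_2$ is decisive: the power budget now constrains \emph{only} $X_3$, namely $X_3\le\frac{\bar P-k_1F}{k_2-k_1}$ (for $k_2>k_1$), while the cache constraint \eqref{cache_special} constrains \emph{only} $X_1$, namely $X_1\le\lfloor C/I^S\rfloor$. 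This clean decoupling is exactly what distinguishes this case from Theorem~\ref{theorem1}.

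Next I would optimize greedily. Because the $X_1$-coefficient $-B_2$ is nonpositive, $X_1$ should be pushed to its upper bound $\lfloor C/I^S\rfloor$ in every case. For $X_3$ I split on the sign of $B_3-B_2$. When $B_3>B_2$ the objective is increasing in $X_3$, so I set $X_3=0$ and $X_2=F-X_1$; feasibility of the power constraint then reduces to $k_1F\le\bar P$, which is the standing feasibility premise, giving \eqref{case3}. When $B_3\le B_2$ the objective is nonincreasing in $X_3$, so I push $X_3$ to the largest value permitted by both the power cap and the remaining task budget $F-X_1$, and assign the rest to route~2.

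The one subtlety requiring care is the shared budget $X_1+X_3\le F$ (equivalently $X_2\ge 0$), which couples the two otherwise independent maximizations. I would resolve it by comparing coefficients: since $B_3\ge 0$ we have $-B_2\le B_3-B_2\le 0$, so a unit increase in $X_1$ lowers the objective by at least as much as a unit increase in $X_3$. Hence $X_1$ takes priority, and the optimal policy is to set $X_1=\lfloor C/I^S\rfloor$ first and only then raise $X_3$ up to $\min\{F-X_1,\lfloor(\bar P-k_1F)/(k_2-k_1)\rfloor\}$, with $X_2$ absorbing the nonnegative remainder. Converting $F-X_3$ into a ceiling expression and wrapping $X_2$ in $\max\{0,\cdot\}$ yields \eqref{case4}.

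The main obstacle is the integrality handling rather than the LP structure: I must verify that rounding the continuous optima ($X_1$ down to $\lfloor C/I^S\rfloor$, $X_3$ down to the floor of its power cap) preserves feasibility of all constraints simultaneously and still attains the minimum over integer points, and in particular that the floor/ceiling bookkeeping in $X_2$ keeps $X_1+X_2+X_3=F$ exactly. I would also flag the implicit feasibility premise $k_1F\le\bar P$ (the minimum achievable average power when $k_1\le k_2$), without which the constraint set is empty. Structurally this mirrors the argument for Theorem~\ref{theorem1} in Appendix~B, with the reversed inequality $k_1\le k_2$ swapping which variable the power budget binds.
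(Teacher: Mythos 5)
Your proposal is correct and is essentially the paper's own argument (Appendix~C): exploit $B_1=0$ to saturate $X_1$ at the cache bound $\lfloor C/I^S\rfloor$, then use the sign of $B_3-B_2$ to decide whether $X_3$ should be zero or as large as the power budget allows. The only differences are presentational and in your favor: you eliminate $X_2$ so that the power constraint becomes a pure upper bound $X_3\le(\bar P-k_1F)/(k_2-k_1)$ decoupled from the cache bound on $X_1$, whereas the paper eliminates $X_3$ and phrases the same constraint as a lower bound on $X_1+X_2$; you also supply the coefficient comparison $-B_2\le B_3-B_2\le 0$ justifying why $X_1$ is prioritized over $X_3$ under the shared budget $X_1+X_3\le F$, and the feasibility premise $k_1F\le\bar P$, both of which the paper leaves implicit. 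One concrete payoff of your bookkeeping: converting $F-X_3$ to a ceiling gives $X_2=\max\bigl\{0,\lceil(\bar P-Fk_2)/(k_1-k_2)\rceil-X_1\bigr\}$, which matches Appendix~C but has the opposite sign in the denominator from the theorem statement's $k_2-k_1$; the statement as printed would make $X_2=0$ whenever $\bar P<Fk_2$, so the denominator in \eqref{case4} appears to be a typo and your version is the correct one.
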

\begin{proof}
Proof of Theorem~2 can be seen in Appendix~C.
\end{proof}

%As for constraint 1, it is obvious that, for more files, if $C_f^S=1$, then more corresponding bandwidth will be $B_f^D+B_f^U=0$. Moreover, since each file is the same and independent, we could just find the most number of $C_f^S$ which can take on value of 1, namely, $M = floor(\frac{C}{I^S})$, where $M$ denotes the maximum number of $C_f^S=1$.

\subsection{Tradeoff Analysis}

%In this section, we do simulation and analysis to verify the algorithm efficiency and optimality in terms of time, computation resource and totally utilized bandwidth. 

%Parameters 

%\begin{figure}[t]
%  \centering
%  \includegraphics[width=85mm, height=70mm]{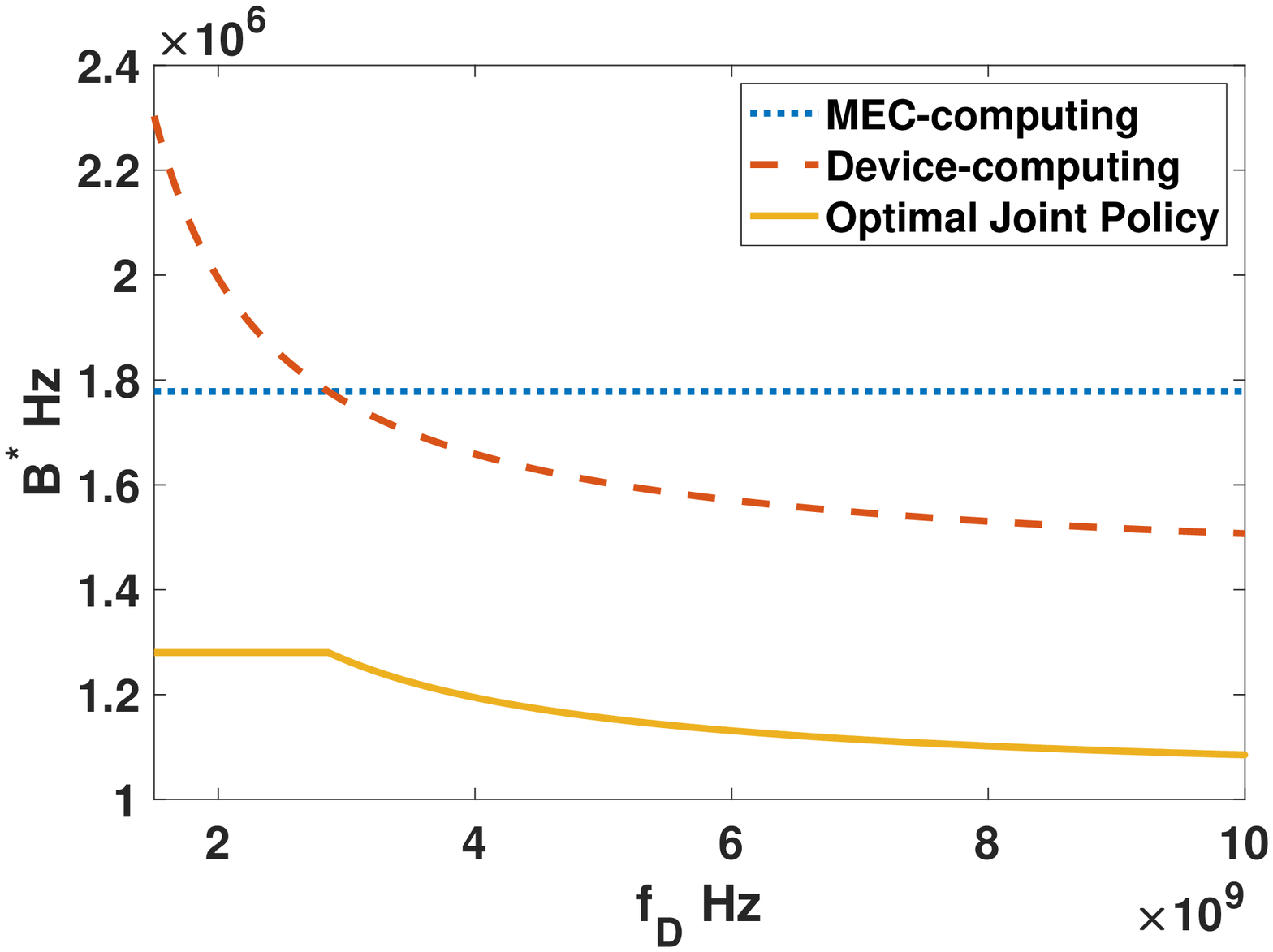}
%\caption{$P_U=250mW/180KHz$, $P_D= 25W/180KHz$, $f_S=2*10^{11}$ Hz, $\mu=10^-27$ joule/cycle/$Hz^2$, $w=10$, $O=1.25$ MB, $I^D=1$ MB, $I^S=3.75$ MB, $d=10$ m, $N_0= 4*10^-21$ Watts/Hz, $f_S=15~GHz, f_D=[1.5 ~to~10]GHz, F=1000, C=1400~MB, \tau=0.8~s, d=10 m$} 
%\label{fig:fig1}
%\end{figure}
\begin{figure}[t]
\begin{center}
 \subfigure[\textcolor{black}{Cache size when $f_D=4$ GHz and $\tau=0.5$ s}.]
  {\resizebox{4.2cm}{!}{\includegraphics{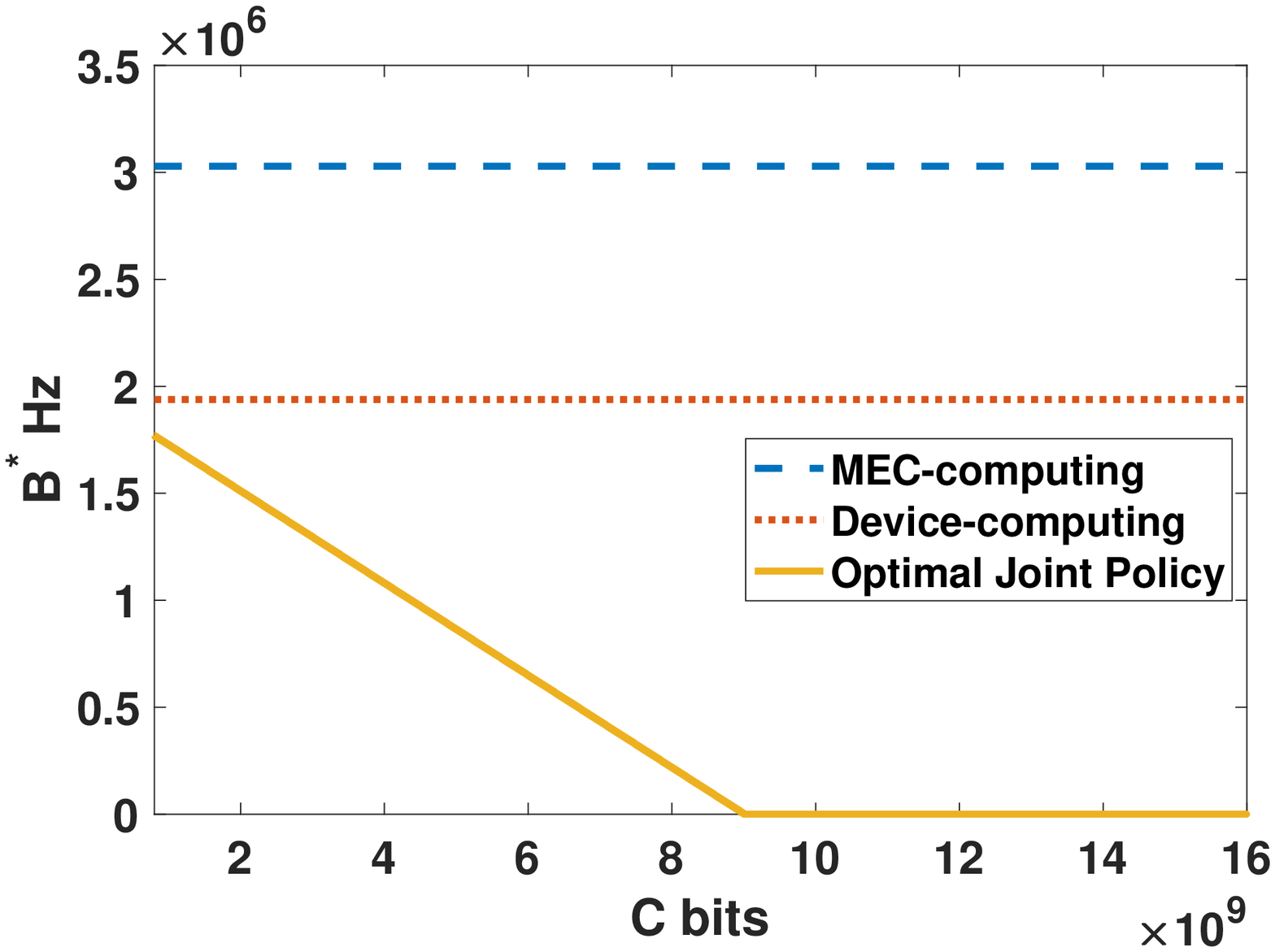}}}
   \subfigure[\textcolor{black}{\textcolor{black}{Computing} frequency when $C=400$ MB and $\tau=0.143$ s}.]
  {\resizebox{4.2cm}{!}{\includegraphics{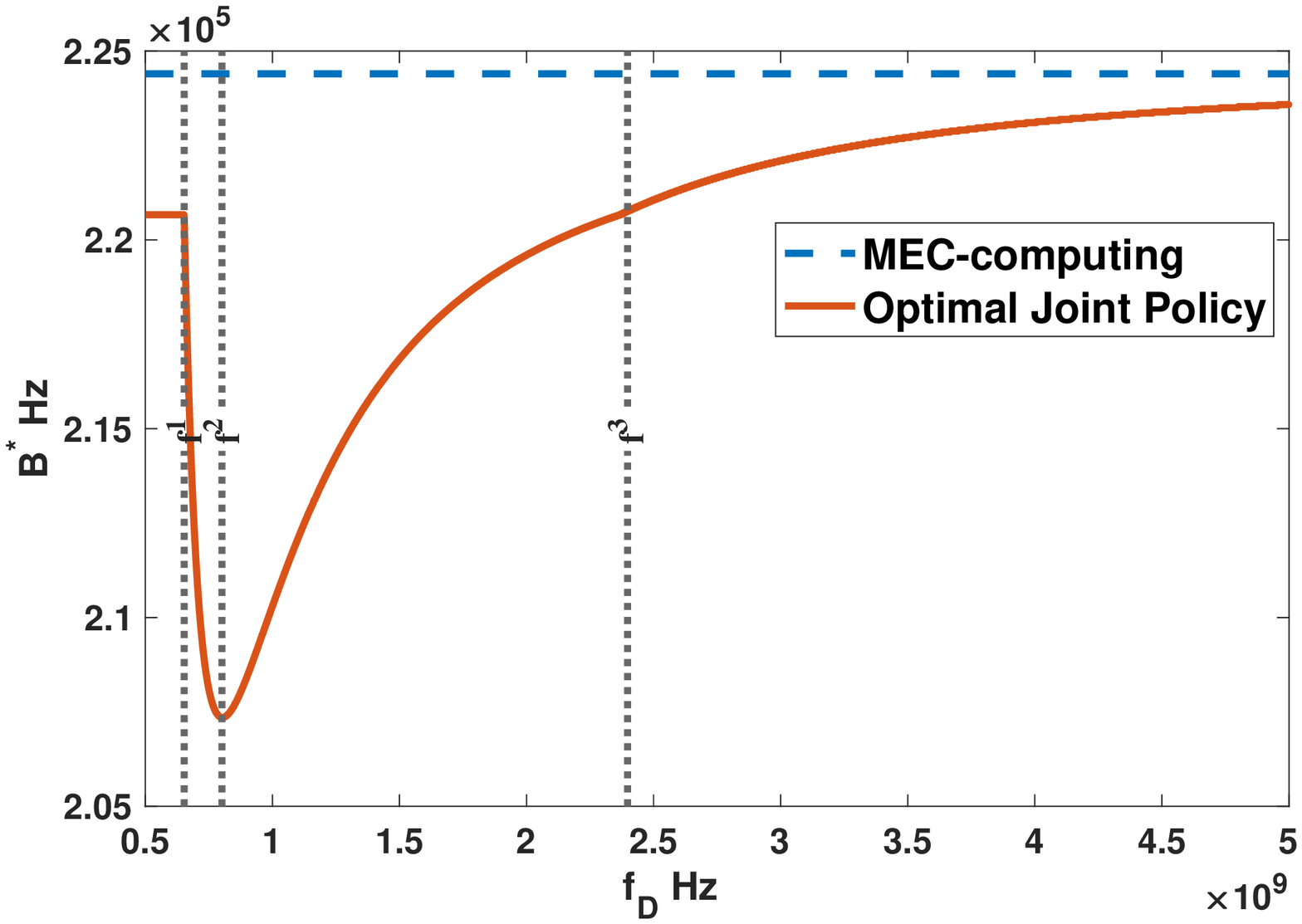}}}
 %{\resizebox{6.0cm}{!}{\includegraphics{F100_C.eps}}}\quad\quad
% {\includegraphics[height=1.5cm,width=3cm]{x.eps}}\quad\quad
% \subfigure[\textcolor{black}{Computation frequency \textcolor{black}{$f_V$ when $\bar{P}=7.5$ W} and $\frac{C}{N} = 30 \%$}.]
 %{\resizebox{6.4cm}{!}{\includegraphics{F100_K.eps}}}\quad\quad
% {\includegraphics[height=1.5cm,width=3cm]{x.eps}}\quad\quad
\end{center}
%\vspace{-8mm}
   \caption{\small{Impact of $C$ and $f_D$ on $B^*$.  $f_S=15$ GHz, $F=300$, $\frac{P_{D}h^2}{N_0}=28.1573$ dB, $\frac{P_{U}h^2}{N_0}=10.98$ dB, $P_{U}=\frac{250\ \text{mW}}{180\ \text{KHz}}$, $P_{D}=\frac{5\ \text{W}}{180\ \text{KHz}}$, $w=10$, $\mu=10^{-27}$.}}
\label{case2}
\end{figure}

\subsubsection{\textbf{$\mathbf{k_1>k_2}$}}
%Theorem \ref{theorem2} and \ref{theorem2} indicates four scenarios, which means we need to consider four cases for each parameter's relation to the total bandwidth.

%Firstly, let us investigate the property of $f_D$. 
%We hereby make a reasonable assumption that $(I^S+I^D)w<<f_D$ since the file size should be much smaller than the calculation frequency, otherwise the time efficiency is never a realizable term, namely, low latency in IoT/5G era. 

When $k_1>k_2$ and $B_3>B_2$, from (\ref{case1}),  there are three possible cases as below.
\begin{itemize}
\item When $ \left \lfloor \frac{\bar{P}-Fk_2}{k_1-k_2} \right \rfloor< \left \lfloor \frac{C}{I^{S}}\right \rfloor   <F  $,  $X_1 =  \left \lfloor \frac{\bar{P}-Fk_2}{k_1-k_2} \right \rfloor $ decreases with $f_D$ since $k_1$ increases with $f_D$, $X_2 = 0$ is independent of $f_D$ and $X_3 = F- X_1$ increases with $f_D$. Then, $B^* = B_3X_3$ increases with $f_D$ with $B_3$ given in Property~4 independent of $f_D$.  This is because when the locally available power is limited, increasing $f_D$ decreases the number of tasks that can be computed locally. Also, $B^*$ is independent of $C$ indicating that it is mainly limited by the local computing power $ \left \lfloor \frac{\bar{P}-Fk_2}{k_1-k_2} \right \rfloor$.

\item When $  \left \lfloor \frac{C}{I^{S}}\right \rfloor <  \left \lfloor \frac{\bar{P}-Fk_2}{k_1-k_2} \right \rfloor<F  $, $X_1 = \left \lfloor \frac{C}{I^{S}}\right \rfloor $ is independent of $f_D$, $X_2 =  \left \lfloor \frac{\bar{P}-Fk_2}{k_1-k_2} \right \rfloor -X_1$ decreases with $f_D$ since $k_1$ increases with $f_D$, and $X_3 $ increases with $f_D$.  \textcolor{black}{Then,  $B^*$ increases with $f_D$ since $B_2<B_3$. Meanwhile, $B^*$ decreases with $C$. }

\item When $ \left \lfloor \frac{C}{I^{S}}\right \rfloor   <F <\left \lfloor \frac{\bar{P}-Fk_2}{k_1-k_2} \right \rfloor  $,  $X_1 =  \left \lfloor \frac{C}{I^{S}}\right \rfloor$ and $X_2 = F - X_1$ are independent of $f_D$ and $X_3 = 0$ is independent of $f_D$. Then,  since $B_2$ given in Property~3 decreases with $f_D$, $B^* = B_2X_2$ decreases with $f_D$.  This is because when the locally available power is large enough, increasing $f_D$ decreases the computation latency. Also, $B^*$ decreases with $C$. 

\end{itemize}

%if $f_D$ increases, 

%then the consumed local computation power $k_1$ increases. Moreover, there are three possible modes, 1) $X$ decreases, $X_1$ unchanged, $X_2$ decreases, and $X_3$ increases if $X>=floor(C/I^{S})$ and $X<F$; 2) $X$ decreases, $X_1$ decreases, $X_2$ unchanged, and $X_3$ increases if $X<floor(C/I^{S})$ and $X<F$; 3) $X=F$ unchanged, $X_1$ unchanged, $X_2$ unchanged, and $X_3$ unchanged if $X>=F$.

When $k_1>k_2$ and $B_3 
\leq B_2$, from (17), there are three possible cases as below. 
\begin{itemize}
\item When $ \left \lfloor \frac{\bar{P}-Fk_2}{k_1-k_2} \right \rfloor< \left \lfloor \frac{C}{I^{S}}\right \rfloor   <F  $,  $X_1 =  \left \lfloor \frac{\bar{P}-Fk_2}{k_1-k_2} \right \rfloor $ decreases with $f_D$, $X_2 = 0$ is independent of $f_D$ and $X_3 = F- X_1$ increases with $f_D$. Then, $B^* = B_3X_3$ is independent of $C$ and increases with $f_D$.  This is because when the locally available power is limited, i.e., smaller than the number of tasks that can be cached locally $\frac{C}{I^S}$, increasing $f_D$ decreases the number of tasks that can be computed locally. 

\item When $  \left \lfloor \frac{C}{I^{S}}\right \rfloor <  \left \lfloor \frac{\bar{P}-Fk_2}{k_1-k_2} \right \rfloor<F  $, $X_1 = \left \lfloor \frac{C}{I^{S}}\right \rfloor $, $X_2 =  0$ and $X_3 = F - X_1 $ which are all independent of  $f_D$.  Then, we have $B^* =  B_3X_3$ independent of $f_D$ and decreases with $C$. This is because when $  \left \lfloor \frac{C}{I^{S}}\right \rfloor <  \left \lfloor \frac{\bar{P}-Fk_2}{k_1-k_2} \right \rfloor<F  $, the bandwidth gain is limited by the local cache size $C$.

\item When $ \left \lfloor \frac{C}{I^{S}}\right \rfloor   <F <\left \lfloor \frac{\bar{P}-Fk_2}{k_1-k_2} \right \rfloor  $,  $X_1 =  \left \lfloor \frac{C}{I^{S}}\right \rfloor$, $X_2 = 0$  and $X_3 = F-X_1$ are independent of $f_D$. Then, $B^* $ is independent of $f_D$ and decreases with $C$. %since $B_2$ given in Property~3 decreases with $f_D$.  This is because when the locally available power is large enough, increasing $f_D$ decreases the local computation delay. 

\end{itemize}

%if $f_D$ increases, then $k_1$ increases. Moreover, there are three possible modes, 4) $X$ decreases, $X_1$ unchanged, $X_2=0$, and $X_3$ unchanged if $X>=floor(C/I^{S})$ and $X<F$; 5) $X$ decreases, $X_1$ decreases, $X_2=0$, and $X_3$ increases if $X<floor(C/I^{S})$ and $X<F$; 6) $X=F$ unchanged, $X_1$ unchanged, $X_2$ unchanged, and $X_3$ unchanged if $X>=F$.

From Fig.~2 (a), we can see that joint caching and computing at the mobile device helps further reduce the bandwidth compared with computing only either at the MEC server or at the mobile device. From Fig.~2 (b), firstly, when $f_D$ is relatively small, $B_2 \geq B_3$, $k_1>k_2$ and $ \left \lfloor \frac{C}{I^{S}}\right \rfloor < F<\left \lfloor \frac{\bar{P}-Fk_2}{k_1-k_2} \right \rfloor$, and thus bandwidth remains unchanged with $f_D$. The first turning point $f^1$ appears when $B_3=B_2$. By setting $B_3=B_2$, the switching point $f^1$ could be explicitly expressed as
\begin{equation}
    f^1=\frac{(I^S+I^D)w}{\tau-\frac{a_3 I^S }{\log(1+\frac{P_D h^2}{N_0})(\sqrt{a_1}+\sqrt{a_2})^2}}.
\end{equation}

Then, the bandwidth starts decreasing with $f_D$. This is because as $f_D$ increases, $B_2 < B_3$, $k_1>k_2$ and $ \left \lfloor \frac{C}{I^{S}}\right \rfloor <F<\left \lfloor \frac{\bar{P}-Fk_2}{k_1-k_2} \right \rfloor $.  Then, the second turning point $f^2$ appears when $ \left \lfloor \frac{\bar{P}-Fk_2}{k_1-k_2} \right \rfloor =F$. By setting $ \left \lfloor \frac{\bar{P}-Fk_2}{k_1-k_2} \right \rfloor =F$, we could obtain the explicit expression for the second turning point
\begin{equation}
    f^2 \approx \sqrt{\frac{\tau(\bar{P}-Fk_2)+\tau F k_2}{\mu w (I^D+I^S)}}.
\end{equation}

Next, the bandwidth $B^*$ starts increasing with $f_D$. This is because as $f_D$ increases, $ \left \lfloor \frac{C}{I^{S}}\right \rfloor <\left \lfloor \frac{\bar{P}-Fk_2}{k_1-k_2} \right \rfloor <F$. Moreover, we could observe that there is another turning point which pushes the optimal policy towards the bandwidth of MEC-computing policy eventually. This turning point happens when $ \left \lfloor \frac{C}{I^{S}}\right \rfloor =\left \lfloor \frac{\bar{P}-Fk_2}{k_1-k_2} \right \rfloor$. By setting $ \left \lfloor \frac{C}{I^{S}}\right \rfloor =\left \lfloor \frac{\bar{P}-Fk_2}{k_1-k_2} \right \rfloor$, the turning point $f^3$ can be expressed as 
\begin{equation}
    f^3 \approx \sqrt{\frac{\tau F I^S(\bar{P}-Fk_2)}{\mu w(I^S+I^D)C}+\frac{\tau F k_2}{\mu w (I^S+I^D)}}.
\end{equation}
When $f_D>f^3$, the optimal policy is the scenario where $k_1>k_2$, $B_3>B_2$ and $ \left \lfloor \frac{\bar{P}-Fk_2}{k_1-k_2} \right \rfloor< \left \lfloor \frac{C}{I^{S}}\right \rfloor   <F  $. The optimal policy converges to the MEC computing policy as $f_D$ goes to infinity, i.e., $X_3 = F$. %This phenomenon aligns , the $X_3$ will keep increasing until it reaches $F$. 

%Above analysis shows that the mode transferring and the corresponding derivation of the turning point. Next, we complete all the mode existing with $f_D$ being the variable and how $X_1$, $X_2$, and $X_3$ changes when $k_1 <= k_2$.

\subsubsection{\textbf{$\mathbf{k_1 \leq k_2}$}}
When $k_1 \leq k_2$ and $B_3>B_2$, from (\ref{case3}), there is only one possible case. The bandwidth gain mainly comes from the local computing with/without caching. The MEC computing does not bring any gain.

\section{Conclusion}
In this paper, we consider a novel bidirectional computation task model and formulate the joint caching and computing optimization problem to minimize the average bandwidth under the latency, cache size and average power constraints. We derive the closed-form expressions for the optimal policy and  the minimum bandwidth, which illustrates that the 3C tradeoff can be classified into nine regions according to the relationship between the cache and computation capabilities at the mobile device, that between the uplink transmission power consumption and the local computation power consumption.%, as well as that between the bandwidth requirement via the local computing without caching and that via the MEC computing. %\textcolor{black}{In the heterogeneous scenario, we propose the LR+CCCP method and show that it outperforms the state-of-the-art CCCP and the heuristic greedy algorithm from both the bandwidth performance and the time efficiency.}

%The optimal policy is the one minimizing the bandwidth for file transmission. We firstly come up with the solution for the homogeneous scenario in which all service-related file size is the same and any file request is deemed as i.i.d.. The simulation has been implemented to show the superiority of the solution over pure MEC server computing or pure mobile device computing without a dynamic task offloading scheme. Next, the heterogeneous scenario is considered as a general case. This problem is proven to be an MMKP problem. 
\section*{Appendix A: Proof of Property~4}
For each task $f\in \mathcal{F}$, when $d_f=0$, we have  $\frac{I_{f}^{D}}{B_{f}^{U}\log(1+\frac{P_{U}h^2}{N_{0}})}+\frac{(I_{f}^{S}+I_{f}^{D})w_f}{f_S}+\frac{O_{f}}{B_f^{D}\log(1+\frac{P_{1}h^2}{N_{0}})} \leq \tau$. Hence, $B_{f,3}$ can be obtained via solving the following optimization problem:
\begin{equation}
\begin{aligned}
&\min_{B_{f}^U,B_{f}^D}{~~~B_f^U+B_f^D} \\
&\ \ s.t.~~~~\frac{a_1}{B_f^U}+\frac{a_2}{B_f^D}\leq a_3,\\
    &~~~~~~~~~~~~B_f^U > 0,\\
    &~~~~~~~~~~~~B_f^D > 0,
\end{aligned}
\end{equation}
where $a_1=\frac{I_{f}^{D}}{\log(1+\frac{P_{U}h^2}{N_{0}})} > 0$, $a_2=\frac{(I_{f}^{S}+I_{f}^{D})w_f}{f_0} > 0$ and $a_3=\tau-\frac{O_{f}}{\log(1+\frac{P_{D}h^2}{N_{0}})} > 0$. We can see that Problem~10 is a convex minimization problem. Denote with $B_f^{U^*}$ and $B_f^{D^*}$ the optimal solution to Problem~10. In order to solve Problem~10, let us first consider a modified version of the above convex problem as below.
\begin{equation}
\begin{aligned}
&\min_{B_{f}^U,B_{f}^D}{~~~B_f^U+B_f^D} \\
&\ \ s.t.~~~~\frac{a_1}{B_f^U}+\frac{a_2}{B_f^D}\leq a_3.\\
\end{aligned}
\end{equation}
If the solution to Problem 11 satisfies $B_f^U>0$ and $B_f^D>0$, then it is also a solution to Problem~10. Based on KKT conditions of Problem~11, we get an optimal solution to Problem~11 as below.
\begin{align}
    &B_{f}^{U^*} = \frac{a_1+\sqrt{a_1 a_2}}{a_3}>0\\
    &B_{f}^{D^*} = \frac{a_2+\sqrt{a_1a_2}}{a_3}>0.
\end{align}
Therefore, we get $B_f^{U^*}$ and $B_f^{D^*}$ of Problem~10, and then $B_{f,3} = B_f^{U^*}+B_f^{D^*}$. The proof ends.

\section*{Appendix B: Proof of Theorem \ref{theorem1}}

\begin{itemize}
\item{Suppose $B_3>B_2$ and $k_1-k_2>0$, from Problem. \eqref{obj_special} constraint \eqref{power_special}, we could obtain an upper-bound, $X_1+X_2<=\lfloor \frac{\bar{P} -Fk_2}{k_1-k_2} \rfloor$. Meanwhile, constraint. \eqref{cache_special} yields upper-bound $X_1<=\lfloor \frac{C}{I^S} \rfloor$. Since $B_1=0$, assigning as much files to processing method 1 as possible is the best policy. Meanwhile, $X_1$ cannot be larger than the total number of files $F$ obviously.Therefore, we have $X_1=\min\{ \lfloor \frac{C}{I^S} \rfloor, F, \lfloor \frac{\bar{P} -Fk_2}{k_1-k_2} \rfloor \}$. Subsequently, because $B_3>B_2$, assigning files to processing method 2 is the optimal policy with the constraint of $X_1+X_2 <= \lfloor \frac{\bar{P} -Fk_2}{k_1-k_2} \rfloor$ which could be larger than the total number of files. Hence, $X_2=\max\{0,min\{F,\lfloor \frac{\bar{P} -Fk_2}{k_1-k_2} \rfloor \}-X_1\}$. Last but not least, $X_3=F-X_1-X_2$.}

\item{If $B_3<=B_2$ and $k_1-k_2>0$, from Problem. \eqref{obj_special} constraint \eqref{power_special}, we could obtain an upper-bound, $X_1+X_2<=\lfloor \frac{\bar{P} -Fk_2}{k_1-k_2} \rfloor$. Meanwhile, constraint \eqref{cache_special} yields upper-bound $X_1<=\lfloor \frac{C}{I^S} \rfloor$. Since $B_1=0$, assigning as much files to processing method 1 as possible is the best policy. Therefore, similarly with the above proof, we have $X_1=\min\{ \lfloor \frac{C}{I^S} \rfloor, F, \lfloor \frac{\bar{P} -Fk_2}{k_1-k_2} \rfloor \}$. Because $B_3<=B_2$, assigning files to processing method 3 is the optimal policy. So we do not utilize process method at all. Hence, $X_2=0$ and $X_3=F-X_1$.}
\\\\
The proof ends here.

\end{itemize}

\section*{Appendix C: Proof of Theorem \ref{theorem2}}

\begin{itemize}
\item{
Suppose $B_3>B_2$ and $k_1-k_2<=0$. Constraint \eqref{power_special} yields that 
$X_1+X_2 >= \lceil \frac{\bar{P}-k_2F}{k_1-k_2} \rceil$.Meanwhile, constraint \eqref{cache_special} yields $X_1<=\lfloor \frac{C}{I^S} \rfloor$. Therefore, we have only one upper-bound and we set $X_1=\lfloor \frac{C}{I^S} \rfloor$. Since $B_3>B_2$, we set $X_2=F-X_1$ and $X_3=0$.}
\item{
Suppose $B_3<=B_2$ and $k_1-k_2<=0$. Similarly, one upper-bound indicates that $X_1=\lfloor \frac{C}{I^S} \rfloor$. Since $B_3<=B_2$, we get rid of $X_2$ as much as possible by setting $X_2=\max\{0,\lceil \frac{\bar{P}-F k_2}{k_1-k_2}) \rceil -X_{1}\}$.  Note that it is possible for $\frac{\bar{P}-k_2F}{k_1-k_2}$ to be negative so that there is no constraint for $X$, which means there is no constraint on $X_2$ and it could be zero directly. Last, $X_3=F-X_1-X_2$.}
\\\\
The proof ends here.
\end{itemize}

\bibliographystyle{IEEEtran}
\bibliography{reference}
\end{document}